\documentclass[journal]{IEEEtran}

\usepackage{amsmath} 
\usepackage{amssymb} 
\usepackage{amsthm}
\usepackage{cite}
\usepackage{multicol} 
\usepackage{listings}
\usepackage{dsfont} 
\usepackage{graphicx}
\usepackage{color}
\usepackage{ifpdf}

\usepackage{subfigure}
\usepackage{setspace} 
\usepackage{nicefrac} 
\usepackage{units} 
\DeclareGraphicsExtensions{.eps}
\newcommand{\f}[1]{\ensuremath{\mathbf{#1}}}                                        
\newcommand{\T}[1]{\text{#1}}                                                       

\newcommand{\figref}[1]{Fig. \ref{#1}}                                              

\begin{document}

\title{Robust Rate Adaptation and Proportional Fair Scheduling with Imperfect CSI}

\author{Richard Fritzsche, Peter Rost, and Gerhard P. Fettweis}

\maketitle

\begin{abstract}
In wireless fading channels, multi-user scheduling has the potential to boost the spectral efficiency by exploiting diversity gains. In this regard, proportional fair (PF) scheduling provides a solution for increasing the users' quality of experience by finding a balance between system throughput maximization and user fairness. For this purpose, precise instantaneous channel state information (CSI) needs to be available at the transmitter side to perform rate adaptation and scheduling. However, in practical setups, CSI is impaired by, e.g., channel estimation errors, quantization and feedback delays. Especially in centralized cloud based communication systems, where main parts of the lower layer processing is shifted to a central entity, high backhaul latency can cause substantial CSI imperfections, resulting in significant performance degradations. In this work robust rate adaptation as well as robust PF scheduling are presented, which account for CSI impairments. The proposed rate adaptation solution guarantees a fixed target outage probability, which is of interest for delay critical and data intensive applications, such as, video conference systems. In addition to CSI imperfections the proposed scheduler is able to account for delayed decoding acknowledgements from the receiver. 
\end{abstract}


\section{Introduction}
\label{sec:introduction}
\IEEEPARstart{T}{he} steadily increasing throughput demand of mobile users together with the limited radio spectrum in cellular communication systems requires an efficient resource allocation. In orthogonal frequency division multiple access (OFDMA) where multiple users are served on different non-overlapping time-frequency resources, modern scheduling algorithms exploit multi-user diversity by accounting for the individual time variations of the users' mobile radio channel. 
\par
In order to perform scheduling, instantaneous channel information needs to be available at the transmitter side. However, channel state information (CSI) is typically impaired due to a variety of sources. In frequency division duplex (FDD) systems, CSI imperfections result from noisy pilot reception at the receiver and erroneous feedback transmission due to quantization errors as well as delays between channel observation and transmission \cite{LHL08, FOF13b}. The latter aspect is especially relevant at slow fading channels \cite{TV08}, where the provided CSI is still fairly correlated with the actual channel. Due to the lack of exact channel knowledge available at the transmitter, the rate achievable is not known precisely. This causes a non-optimal scheduling decision as well as imperfect rate assignment. If the channel amplitude is overestimated, decoding errors may occur at the receiver, which lead to outages of the respective data blocks and ultimately to a degradation in spectral efficiency \cite{TV08}.
\subsection{Related Literature}
In the area of multi-user scheduling, different objectives have been investigated over the last two decades. Opportunistic scheduling maximizes the system throughput by selecting the user with the best channel \cite{KH95, LG01}. However, with this approach the system may be unfair towards users with poor channel conditions, which might never be scheduled. User fairness can be achieved by employing throughput balancing or round robin scheduling \cite{LBS99} which comes at the cost of a significant degradation in the overall throughput. A compromise between throughput maximization and fairness is of interest \cite{Kel97} and has been reflected by the proportional fair (PF) scheduler, which maximizes the sum of the logarithmic user-throughputs \cite{JPP00, BW01}. PF multi-user scheduling has been investigated for several scenarios, such as, single-cell \cite{CB07} and multi-cell networks \cite{CC06, ZFL11}. An extension to multi antenna systems has been discussed in \cite{PSL03, Lau05}. A PF algorithm which accounts for intra-cell as well as inter-cell interference has been proposed by \cite{LNZ10}.
\par
The publications mentioned above inherently assume that CSI is precisely available at the scheduler. Scheduling with imperfect CSI has been studied in several works considering throughput maximization \cite{WE09, AAK+11, Ros12}. A PF scheduling scheme, which provides robustness against imperfect channel knowledge has been presented in our preliminary work \cite{FRF14}.
\par
\begin{figure}[t]
    \centering
        \includegraphics[width=40mm]{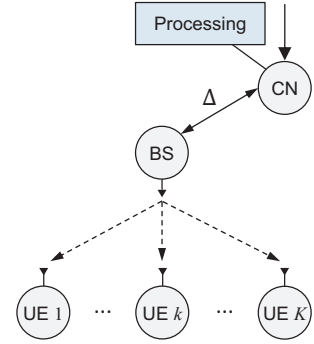}
    \caption{Setup with $K$ user equipments (UEs) served by a base station (BS), which is connected to a central node (CN). According to a cloud-based architecture, the main PHY and MAC layer processing is performed at a centralized data center.}
    \label{fig:setup}
\end{figure}
For delay constraint applications, the system operates at a fixed outage probability and hybrid automatic repeat request (HARQ) \cite{SK76, LMS07} is subject to a maximum number of re-transmissions in order to guarantee a certain quality of service (QoS). For applications with strict conditions on latency requirements (e.g., video conference systems, tactile internet), HARQ may not be an appropriate mechanism. Moreover, round trip delays may exceed respond time limitations, required by HARQ protocols (e.g. LTE-A). In LTE-A the transmission rate can be adapted by means of outer loop link adaptation (OLLA), which is based on acknowledgments of previous transmissions. However, in scenarios where the round trip delay is in the range of the channels coherence time, also the acknowledgment is insufficiently correlated with the actual channel.
In this work, we restrict our-selves to inner loop link adaptation (ILLA), where outage do not impact the rate adaptation of following transmissions.
\par
For generalized fading channels, outage probability has been studied in \cite{KAS00} for receive combining techniques. Rate assignment for throughput maximization w.r.t. slow fading channels has been presented in \cite{KCB+11}, while fast fading channels have been considered in \cite{WJ10} employing a fixed outage probability. The effect of imperfect CSI on the scheduling performance has been investigated in \cite{GM13}. In \cite{GM13}, the authors presented an analytical framework for jointly analyzing the performance of certain state of the art methods regarding feedback transmission, rate adaptation and scheduling. In contrast to our work the algorithms itself are not optimized. Instead the framework allows to select the optimal combination of particular methods for a given scenario.
\par
Recently, cloud based architectures in cellular communication systems attract significant attention \cite{RBD+14}. The basic idea is the outsourcing of computationally complex algorithms of the physical (PHY) and the medium access control (MAC) layer to a data center which provides processing power as well as high scalability. In such an architecture, the scheduler is located at a central node (CN), which is accessible to a large number of BSs via the backhaul network. In practice, the backhaul connections cause technology-dependent delays. Such extra latency causes outdated CSI as well as delayed transmission acknowledgements.

\subsection{Contribution of this Work}
In this paper, we study the effect of imperfect CSI in the context of a cloud-based cellular communication system, where rate adaptation and scheduling is performed at a centrally located data center. In order to provide large user throughput together with a guaranteed QoS, a robust rate adaptation method is presented, which guarantees a fixed outage probability. The improved rate adaptation is incorporated into the PF scheduling metric in order to achieve robustness against impaired channel knowledge. The algorithm additionally accounts for delayed acknowledgements of successful transmissions. 
\par
This paper is an extension of our previous work \cite{FRF14}. It provides deeper
insight on the derivations of proposed methods regarding rate adaptation and PF scheduling. Moreover, algorithmic solutions w.r.t. a practical implementation as well as an extensive evaluation part is given. While the impact of backhaul delays has mainly been addressed in the context of cooperative cellular systems \cite{FF11, DF11}, this work focuses on non-cooperative cloud-based communication systems.

\par
The remainder of this paper is structured as follows. The system model is presented in Section \ref{sec:systemModel} before outage probability results with imperfect CSI are given in Section \ref{sec:outageProbability}. Section \ref{sec:proportionalFairScheduling} examines the proposed PF scheduling scheme, while
Section \ref{sec:simulationResults} shows simulation results followed by conclusions in Section \ref{sec:conclusions}.
\par

\subsection{Notation} Conjugate, transposition and conjugate
transposition are denoted by $(\cdot)^*$, $(\cdot)^T$ and $(\cdot)^H$, respectively.
Expectation is $\mathds{E}\{\cdot\}$. The probability of an event $A$ is $\mathds{P}\{A\}$, while the probability of event $A$ conditioned on a given event $B$ is $\mathds{P}\{A | B\}$. Furthermore, $\mathds{C}$ denotes the set of
complex numbers and $\mathcal{N}_{\mathds{C}}(\mu,\sigma^2)$ refers to a
complex normal distribution with mean $\mu$ and
variance $\sigma^2$.

\section{System Model}
\label{sec:systemModel}
This work considers a single-cell downlink system comprising $K$ user equipments (UEs), which are served by the base station (BS) on different non-overlapping resource elements, as known from OFDMA. The BS is connected to a central node (CN), e.\,g., a data center with sufficient computational power and scalability. Accordingly, the main PHY and MAC layer algorithms are performed in a centralized fashion at the CN (see \figref{fig:setup}). In this work, the algorithms particularly comprise rate adaptation and multi-user scheduling. 
\par
Centralized processing requires that instantaneous channel information is available at the CN. Assuming the employment of an FDD system, the downlink channel cannot be obtained by observing the uplink channel. Hence, CSI needs to be provided to the BS, from where it is then forwarded to the CN. The whole information flow including signaling as well as data transmission is illustrated as message sequence chart in \figref{fig:sequence}. 
\begin{figure}[t]
	\centering
	\includegraphics[width=80mm]{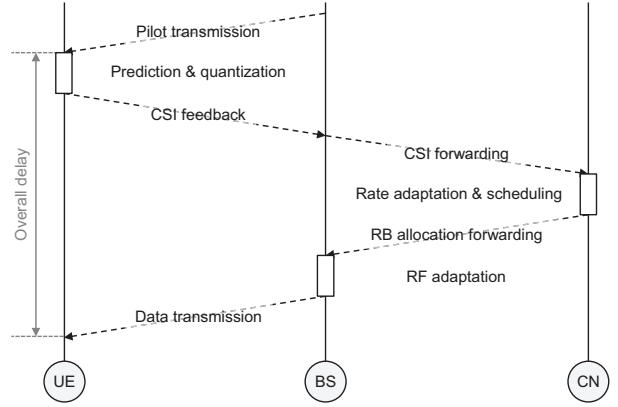}
	\caption{The message sequence chart illustrates the process between channel observation and transmission, including rate adaptation and scheduling at the central node (CN).}
	\label{fig:sequence}
\end{figure}
\subsection{Data Transmission}
The instantaneous radio channel at time slot $n$ between the BS and UE $k$ is denoted with $h_k[n]\sim\mathcal{N}_{\mathds{C}}(0,\lambda_k)$, where $\lambda_k$ is the mean channel gain of UE $k$, reflecting path loss and shadow fading effects. A time slot reflects the duration of a transmission block, which refers to a collection of continuous symbol transmissions in time and frequency where the channel is assumed to be constant. The radio resources of each transmission block are scheduled only to a single user. Consequently, no inter-user interference needs to be considered.
\par
Assuming Gaussian distributed channel input $x_k\sim\mathcal{N}_{\mathds{C}}(0,\rho)$ with transmit power $\rho$, the received signal is given as
\begin{equation}
y_k = h_k[n]x_k+\eta_k,
\end{equation}
where $\eta_k\sim\mathcal{N}_{\mathds{C}}(0,1)$ is the zero mean complex Gaussian receiver noise at UE $k$ with power equal to one.
\par 
With precise knowledge of the instantaneous channel gain $|h_k[n]|^2$ at the transmitter as well as precise knowledge of the complex channel $h_k[n]$ (including amplitude and phase) at the receiver side, the capacity (maximum achievable transmission rate) for UE $k$ at time slot $n$ is given by
\begin{equation}
C_k[n] = \log_2\left(1 + \rho |h_k[n]|^2\right).
\label{equ:rate_perfect}
\end{equation}
Considering a long-term power constraint, the capacity is achieved by adapting the transmit power $\rho$ at each time slot to the respective channel state \cite{TV08}. However, this work assumes an instantaneous power constraint with a fixed power allocation $\rho$ at each time slot. The signal-to-noise-ratio (SNR) experienced at UE $k$ is then given by $\gamma_k = \rho\lambda_k$. 
\par 

\subsection{Outage Probability and Throughput}
The channel capacity in (\ref{equ:rate_perfect}) is only achievable with precise rate allocation, i.e., the maximum rate supported by the instantaneous channel needs to be assigned for transmission. However, with imperfect channel knowledge, the achievable rate is not known at the transmitter. Consequently, the allocated transmission rate might not match the channel capacity. 
\par
If the rate allocated is below capacity, the channel's potential is not fully exploited. In this case only a reduced spectral efficiency is achieved. On the other hand, if the rate assigned exceeds capacity, outage of the transmitted data block occurs, i.e, the codeword received at the UE cannot be decoded correctly \cite{TV08}.
\par
Let function $S_k[n]$ indicate the success of transmission to user $k$ in time slot $n$, i.e., $S_k[n]$ is equal to one if the codeword has been successfully decoded and zero otherwise. With the rate assigned for transmission $\bar{R}_k[n]$, the actually experienced rate at UE $k$ in time slot $n$ is $R_k[n] = S_k[n]\bar{R}_k[n]$. The probability of outage is given by
\begin{equation}
\begin{split}
p_{\T{out},k}[n] &= \mathds{P}\left\{S_k[n] = 0\right\}\\
& = \mathds{P}\left\{\log_2\left(1+\rho |h_k[n]|^2\right)<\bar{R}_k[n]\right\}.
\end{split}
\label{equ:outage_probability}
\end{equation}
From the transmitter point of view, $h_k[n]$ is a random variable. An extension of (\ref{equ:outage_probability}) to the case with side information is stated in Section \ref{sec:outageProbability}.
\par
The throughput of UE $k$ obtained until time slot $N$ is given by
\begin{equation}
T_k[N] = \frac{1}{N}\sum_{n=1}^NR_k[n] = \frac{1}{N}\sum_{n=1}^NS_k[n]\bar{R}_k[n].
\label{equ:net_rate}
\end{equation}
The throughput expression in (\ref{equ:net_rate}) refers to the average rate of user $k$,  experienced up to time slot $N$. 

\subsection{Imperfect CSI}
\label{sec:imperfectCSI}
In this section, we present the model for imperfect CSI used in this work. In FDD systems, the downlink channel cannot be obtained from corresponding uplink measurements but is observed at the UE side and provided to the BS via an uplink control channel with limited capacity. Consequently, the CSI available at the BS is impaired by channel estimation errors, feedback quantization and potential delays between channel estimation and actual data transmission. 
In this work, we use the model derived in \cite{FOF13} which considers all three effects. CSI is obtained by minimum mean square error (MMSE) estimation/prediction, i.\,e., the unknown channel 
\begin{equation}
h_k[n]\sim\mathcal{N}_{\mathds{C}}\left(\hat{h}_k[n], \epsilon_k\right)
\end{equation}
can be represented by a complex Gaussian random variable, whose mean value refers to the currently available channel estimate $\hat{h}_k[n]$, while its variance $\epsilon_k$ reflects the respective channel uncertainty \cite{Kay93}. Note that $\epsilon_k/\lambda_k$ refers to the normalized CSI impairment and takes values in the interval $[0; 1]$. The unknown channel can equivalently be written as the sum of the channel estimate and a zero mean Gaussian random error $e_k[n]\sim\mathcal{N}_{\mathds{C}}(0,\epsilon_k)$, i.\,e.
\begin{equation}
h_k[n] = \hat{h}_k[n] + e_k[n].
\label{equ:channel_imperfect_CSI}
\end{equation}
Note that in this model, the mean gain of the channel estimate $\mathds{E}\{|\hat{h}_k[n]|^2\} = \lambda_k(1-\epsilon_k/\lambda_k)$ is reduced according to the error variance to $\epsilon_k$. 
According to \cite{FOF13}, $\epsilon_k$ can be calculated based on the pilot SNR, the quantization resolution, the delay and the processing window length $W$ of the prediction filter. In this regard the time variation of the channel is reflected by the correlation between two channel coefficients, delayed by $\Delta$ time slots:
\begin{equation}
c[\Delta] = \mathds{E}_{n}\{h_k[n]h_k^*[n+\Delta]\}/\lambda_k = J_0\left(q\frac{\Delta}{T_{\T{c}}}\right),
\label{equ:bessel}
\end{equation} 
where $T_{\T{c}}$ and $J_0$ are the normalized coherence time (normalized to the time slot duration) and the zero-th order Bessel function of the first kind, respectively. The constant $q=J_0^{-1}(0.5)$ ensures that $T_{\T{c}}$ is referring to the $50\%$ coherence time. In this work, slow fading is assumed such that the channels of subsequent time slots show strong correlations. Using (\ref{equ:bessel}),  the covariance vector 
\begin{equation}
\f{c}[\Delta]=[c[\Delta],\dots c[\Delta+W-1]]^T
\end{equation} 
as well as the covariance matrix $\f{C}=[\f{c}[0],...,\f{c}[-\Delta]]$ are defined, considering that $W$ subsequent channel observations are used for prediction.
\par
Denoting the number of pilot signals within a transmission block as $N_P$ and the number of feedback bits per complex channel coefficient as $Q$, the error variance is given by
\begin{equation}
\epsilon_k = (1-2^{-Q})\f{c}^H[\Delta](\f{C}+\rho N_P)^{-1}\f{c}[\Delta].
\end{equation}
A more detailed derivation and explanation of the feedback model is given in \cite{FOF13}.

\section{Robust Rate Adaptation with Imperfect CSI}
\label{sec:outageProbability}
In this section, the proposed rate adaptation scheme is derived.
The algorithm is performed for each user individually. Since this work focuses on ILLA, the rate adaptation at a particular time-slot is not affected by previous transmissions. Independence between multiple users and subsequent time-slots allows us to omit the UE index $k$ as well as the time slot index $n$ for improving readability in this section. Note, that the impact of feedback latency is still fully reflected by the error variance $\epsilon$.
Furthermore, the amplitude of the actual channel and the CSI are denoted as $g=|h|$ and $\hat{g}=|\hat{h}|$, respectively.
\par
Utilizing the results of \cite{KAS00, TV08}, the probability of outage conditioned on the available channel information $\hat{g}$ is given as
\begin{equation}
\renewcommand\arraystretch{1.8}
\begin{array}{rl}
p_{\T{out}} &= \mathds{P}\left\{\log_2\left(1+\rho g^2\right)<\bar{R} \quad|\quad \hat{g}\right\}\\
&=\mathds{P}\left\{g<\sqrt{\left(2^{\bar{R}}-1\right)/\rho} \quad|\quad \hat{g}\right\}\\
&=\mathcal{F}_{g|\hat{g}}\left(\sqrt{\left(2^{\bar{R}}-1\right)/\rho}\right),
\end{array}
\label{equ:outageProb}
\end{equation}
where $\mathcal{F}_{g|\hat{g}}(x)$ denotes the cumulative distribution function (CDF) of $g$ conditioned on side information $\hat{g}$. Eq. (\ref{equ:outageProb}) allows to obtain the outage probability for the given channel estimate $\hat{g}$ when rate $\bar{R}$ is allocated. However, in case of delay constrained applications, the outage probability is given as a constraint and the transmitter is interested in the rate allocated to satisfy this constrained. Consequently, the inverse function $\bar{R}(\hat{g}, p_{\T{out}})$ needs to be found.
\par
The amplitude of a complex Gaussian non-zero mean random variable follows a Rician distribution \cite{TV08}. Consequently, the probability density function (pdf) of $g$ conditioned on $\hat{g}$ results in
\begin{equation}
f_{g |\hat{g}}(g) = \frac{2g}{\epsilon}\T{exp}\left(-\frac{g^2+\hat{g}^2}{\epsilon} \right)\bar{J}_0\left(\frac{2g\hat{g}}{\epsilon} \right),
\label{equ:pdf_channel}
\end{equation}
where $\bar{J}_0(x)=\sum_{l=0}^\infty(x/2)^{2l}/(l!\Gamma(l+1))$ refers to the modified Bessel function of the first kind and order zero, while $\Gamma(x)=\int_{0}^{\infty}t^{x-1}e^{-t}dt$ is the gamma function.
\par
The CDF in (\ref{equ:outageProb}) is obtained by integrating (\ref{equ:pdf_channel}) over the interval $[0; b]$ with $b =\sqrt{\left(2^{\bar{R}}-1\right)/\rho}$:
\begin{equation}
\mathcal{F}_{g|\hat{g}}(b)= \int_0^b f_{g |\hat{g}}(g) dg.
\label{equ:cdf_channel1}
\end{equation}
With the results of \cite{Nut75}, the integral in (\ref{equ:cdf_channel1}) is given by
\begin{equation}
\renewcommand\arraystretch{1.8}
\begin{array}{rl}
\hspace{-0.2cm}\mathcal{F}_{g|\hat{g}}(b) \hspace{-0.2cm}&=1 - Q_1\left(\sqrt{\frac{2\hat{g}^2}{\epsilon}},\sqrt{\frac{2b^2}{\epsilon}} \right)\\
&=1 - \T{exp}\left(-\frac{\hat{g}^2 + b^2}{\epsilon}\right)\sum_{m=0}^\infty\left(\frac{\hat{g}}{b}\right)^m \bar{J}_m\left(\frac{2\hat{g}b}{\epsilon}\right) ,
\end{array}
\label{equ:cdf_channel2}
\end{equation}
where $Q_1$ is the Marcum Q-function and $\bar{J}_m(x)=\sum_{l=0}^{\infty}\frac{1}{l!\Gamma(l+m+1)}\left(\frac{x}{2}\right)^{2l+m}$ is the modified Bessel function of the first kind and order $m$. 
\par
Hence, the resulting outage probability with given $\hat{g}$ is given by
\begin{equation}
\renewcommand\arraystretch{2}
\begin{array}{rl}
p_{\T{out}}=1 - &\hspace{-0.3cm}\T{exp}\left(\frac{1-\rho\hat{g}^2 - 2^{\bar{R}}}{\rho\epsilon}\right)\cdot\\
&\sum_{m=0}^\infty\left(\frac{\hat{g}\sqrt{\rho}}{\sqrt{2^{\bar{R}}-1}}\right)^m \bar{J}_m\left(\frac{2\hat{g}\sqrt{2^{\bar{R}}-1}}{\epsilon\sqrt{\rho}}\right).
\end{array}
\label{equ:outage_last}
\end{equation}
However, (\ref{equ:outage_last}) cannot be reformulated in order to determine the outage rate $\bar R$ in closed form. Hence, numerical methods need to be applied to determine the $\bar R$ which guarantees a target outage probability $\bar{p}_{\T{out}}$.
\par
\begin{figure}[t]
	\centering
	\includegraphics[width=0.48\textwidth]{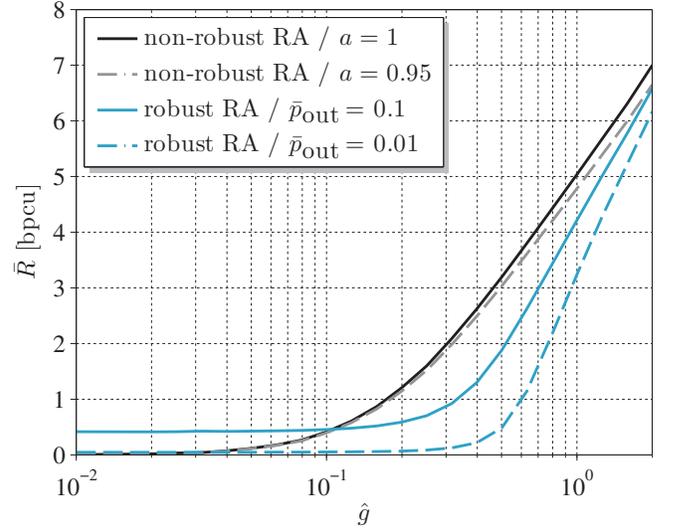}
	\caption{Rate assigned for transmission as a function of the amplitude of the available channel estimate.}
	\label{fig:assignedRate}
\end{figure}
\begin{figure}[t]
	\centering
	\includegraphics[width=0.48\textwidth]{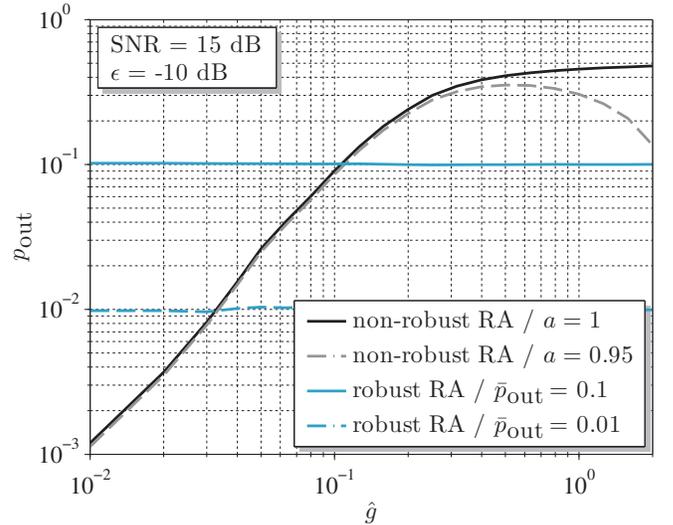}
	\caption{Resulting outage probability as a function of the amplitude of the available channel estimate.}
	\label{fig:resOutage}
\end{figure}
\figref{fig:assignedRate} shows the rate allocated as a function of the estimated channel amplitude $\hat g$. The blue curves refer to the proposed robust rate allocation (RA) scheme, which is given by the inverse function of (\ref{equ:outage_last}), as discussed above. The graphs are plotted for a target outage probability of $\bar{p}_{\T{out}}=0.1$ and $\bar{p}_{\T{out}}=0.01$. For comparison, non-robust RA is shown (black curve). The latter algorithm does not take into account that CSI is imperfect. Note, that with decreasing error variance $\epsilon$, the robust RA converges to the non-robust RA. The non-robust scheme can be parametrized by a back-off factor $a$, according to 
\begin{equation}
\bar{R} = a\log_2(1+\rho\hat{g}^2).
\label{equ:rateBackoff}
\end{equation}
Decreasing $a$ leads to a less aggressive rate allocation and a reduction of the outage probability. The results of the respective functions employing a back-off factor of $a=0.95$ are illustrated by the gray dashed line.
\par 
\begin{figure}[t]
	\centering
	\includegraphics[width=0.48\textwidth]{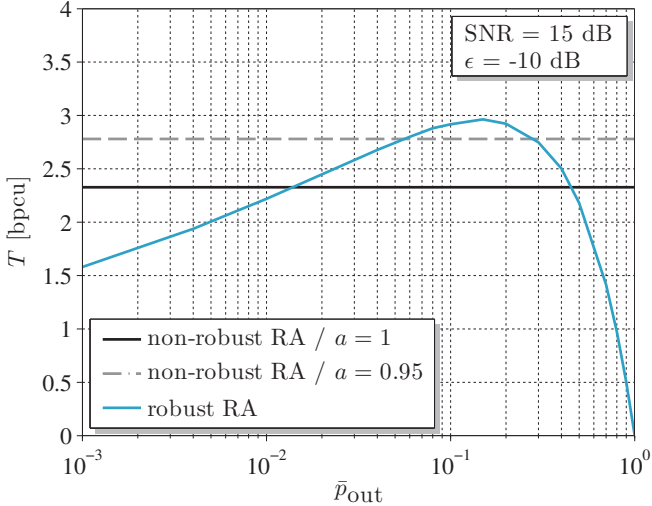}
	\caption{Mean user throughput as a function of the target outage probability.}
	\label{fig:netRate}
\end{figure}
\figref{fig:resOutage} shows that the robust scheme achieves the target outage probability for all channel amplitudes $\hat g$. Consequently the same QoS can be obtained for all transmissions. In contrast, non-robust RA selects the transmission rate based Shannon capacity. CSI is assumed to be precisely available. Hence, the target outage probability is ignored by the algorithm. As a consequence, the requested QoS constraints are only achieved incidentally for small amplitudes. However, reducing the back-off factor $\alpha$ can increase the percentage of fulfilled QoS constraints.
\par 
The throughput which results from the actually received transmission blocks is illustrated in \figref{fig:netRate}. In the low-$\bar{p}_{\T{out}}$ regime the guaranteed QoS comes with a reduction in throughput. For target outage probabilities between $0.1$ and $1$, the throughput decreases again. This effect results from a high packet error rate, which is forced by the robust algorithm in order to reach the target outage probability. However, this regime is of little practical relevance.

\section{Robust Proportional Fair Scheduling with Imperfect CSI}
\label{sec:proportionalFairScheduling}
In this section, the results derived for robust rate adaptation are applied to the proportional fair (PF) scheduler. It is shown that the robust PF solution is achieved by just substituting the rate assigned with the one resulting from robust rate adaptation. Consequently, accounting for imperfect CSI does not change the scheduler itself but only its input. 
\par
Furthermore, the optimal robust PF solution for delayed transmission acknowledgements is derived. In this case, the actual throughput during the last $\Delta$ transmissions is unknown to the scheduler as well. The optimal solution inherently estimates the unknown throughput by taking all possible cases into account.
\par 
The previous section presented rate adaptation independently of the actual user and time slot, i.\,e., the respective indexes $k$ and $n$ are utilized again. In the following, we first present the model for the expected rate, we then discuss the case of immediate acknowledgements, and finally we present the robust PF scheduler for delayed transmission acknowledgements. 

\subsection{Expected Rate}
\label{sec:expectedRate}

Assuming UE $k$ is served in time slot $n$, the expected rate is given by
\begin{equation}
\hat{R}_k[n] = \mathds{E}\{R_k[n]\} = (1 - p_{\T{out},k})\bar{R}[n], 
\end{equation}
which is the product of the probability of a successful transmission and the respective transmission rate. 
Since the experienced user rate is not precisely known to the transmitter, it refers to a random variable with mean value $\hat{R}_k[n]$. Consequently, the actual experienced rate is given by
\begin{equation}
R_k[n] = \hat{R}_k[n] + \Psi_k[n],
\label{equ:rateSplit}
\end{equation}
which is the sum of the expected rate and a bias term $\Psi_k[n]$. The experienced rate is either the rate $\bar{R}_k[n]$ assigned for transmission or zero. Hence, the bias term results in
\begin{equation}
\Psi_k[n] = 
\begin{cases}
    \bar{R}_k[n]-\hat{R}_k[n]&\text{if }\bar{R}_k[n]\leq C_k[n] \\
    -\hat{R}_k[n]&\text{otherwise}.\\
\end{cases}
\label{equ:biasTerm}
\end{equation}
By definition, the expectation of $\Psi_k[n]$ is zero, since
\begin{equation}
\renewcommand\arraystretch{1.8}
\begin{array}{rl}
\hspace{-0.37cm}\mathds{E}\{\Psi_k[n]\} \hspace{-0.2cm}&=(1\hspace{-0.07cm}-\hspace{-0.07cm}p_{\T{out},k})(\bar{R}_k[n]\hspace{-0.07cm}-\hspace{-0.07cm}\hat{R}_k[n])\hspace{-0.07cm}-\hspace{-0.07cm}p_{\T{out},k}\hat{R}_k[n]\\
&=(1\hspace{-0.07cm} - \hspace{-0.07cm}p_{\T{out},k})\bar{R}_k[n]\hspace{-0.07cm} -\hspace{-0.07cm} \hat{R}_k[n] =0.
\end{array}
\label{equ:expectationBiasTerm}
\end{equation}
With the given rate model the derivation of the PF scheduler can now be extended to the case of imperfect CSI, as it is stated in the following two sections.

\subsection{Immediate Acknowledgements}
\label{sec:immediateCSI}

In this section, we assume that the throughput of each user until the last time slot, $n-1$, is precisely known to the transmitter. In practice, this would require immediate transmission acknowledgements. Although this case is rather artificial, it is an important step towards the more practical case of delayed acknowledgements. In the following, we use the indicator function $I_k[n]$, which is equal to one if UE $k$ is scheduled at time slot $n$ and zero otherwise.
\newtheorem{Theorem}{Theorem}
\begin{Theorem}
The robust PF scheduler selects the user with the largest value
\begin{equation}
\hat{\nu}_k[n] = \mathds{E}\{\nu_k[n]\}=\frac{I_k[N]\hat{R}_k[N]}{T_k[N-1]}.
\label{equ:schedulingMetric_theorem}
\end{equation}
Consequently, the PF scheduler itself does not need to be changed. Only the rate assigned for transmission is adapted according to the robust algorithm stated in Section \ref{sec:outageProbability}.
\end{Theorem}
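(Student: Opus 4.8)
\emph{Proof outline.} The plan is to first reproduce the classical proportional-fair (PF) derivation and then show that imperfect CSI enters only through a conditional expectation, leaving the selection rule intact. Writing $N$ for the current slot, I would start from the definition of the PF policy as the one that maximises $\sum_k \log T_k[N]$ over the scheduling decision, i.e.\ over the indicators $I_k[N]\in\{0,1\}$ with $\sum_k I_k[N]=1$. From (\ref{equ:net_rate}) one has the recursion $T_k[N] = T_k[N-1] + \tfrac{1}{N}\bigl(R_k[N]-T_k[N-1]\bigr)$, so a first-order Taylor expansion of $\log(\cdot)$ about $T_k[N-1]$ --- the standard step in the PF literature \cite{JPP00,BW01}, whose remainder is $o(1/N)$ after summing over the $K$ users --- yields
\[
\sum_k \log T_k[N] = \sum_k \log T_k[N-1] + \frac{1}{N}\sum_k \frac{R_k[N]-T_k[N-1]}{T_k[N-1]} + o(1/N).
\]
Since $\sum_k \log T_k[N-1]$ and $\sum_k 1$ do not depend on the decision, maximising the utility is, to leading order, equivalent to maximising $\sum_k I_k[N]R_k[N]/T_k[N-1]$; with exactly one indicator equal to one, this amounts to ranking users by the per-user metric $\nu_k[N] = I_k[N]R_k[N]/T_k[N-1]$, which is the familiar PF rule.

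Next I would use the defining feature of the imperfect-CSI setting: at the instant the decision is made, the CN knows neither the true channel amplitude $g_k$ nor the decoding outcome $S_k[N]$, so the realised rate $R_k[N]$ is a random variable from its point of view. The robust choice is then to maximise the \emph{expected} leading-order utility increment given the available channel estimates, i.e.\ to rank users by $\mathds{E}\{\nu_k[N]\}$. Because the surviving term in the expansion above is \emph{linear} in $R_k[N]$ --- no Jensen gap appears at order $1/N$ --- and $T_k[N-1]$ is deterministic, the expectation passes through: $\mathds{E}\{\nu_k[N]\} = I_k[N]\,\mathds{E}\{R_k[N]\}/T_k[N-1]$. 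Substituting the bias decomposition $R_k[N] = \hat R_k[N] + \Psi_k[N]$ from (\ref{equ:rateSplit}) together with $\mathds{E}\{\Psi_k[N]\}=0$ from (\ref{equ:expectationBiasTerm}) gives $\mathds{E}\{R_k[N]\} = \hat R_k[N] = (1-p_{\T{out},k})\bar R_k[N]$, which is exactly (\ref{equ:schedulingMetric_theorem}).

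The structural conclusion then follows by inspection: $\hat\nu_k[N]$ is the classical PF metric $R_k[N]/T_k[N-1]$ with the realised rate replaced by the robustly adapted expected rate $\hat R_k[N]$ of Section \ref{sec:outageProbability}; the ``select the largest ratio'' rule is unchanged and only its numerator is. In particular, when the robust rate adaptation holds the outage at a common target $\bar p_{\T{out}}$ for all users, $\hat R_k[N] = (1-\bar p_{\T{out}})\bar R_k[N]$ and the constant factor $1-\bar p_{\T{out}}$ drops out of the $\arg\max$, so the scheduler literally feeds the robust transmission rate $\bar R_k[N]$ into the ordinary PF metric.

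I expect the main obstacle to be making the first step more than heuristic, i.e.\ pinning down in what precise sense ``maximise $\sum_k\log T_k[N]$'' reduces to ``maximise $\sum_k R_k[N]/T_k[N-1]$'': one either passes to the $N\to\infty$ gradient / stochastic-approximation characterisation of PF as in \cite{BW01}, or one simply \emph{adopts} the first-order increment as the definition of the PF metric and then observes that the conditional expectation may be taken term by term precisely because only the linear $1/N$ term matters. Everything after that is bookkeeping built on (\ref{equ:rateSplit})--(\ref{equ:expectationBiasTerm}), which are already established above.
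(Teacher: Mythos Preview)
Your proposal is correct and follows essentially the same route as the paper: write the throughput recursion, linearise $\sum_k\log T_k[N]$ about $T_k[N-1]$ via a first-order (gradient) step to isolate the increment $\sum_k I_k[N]R_k[N]/T_k[N-1]$, then pass to the conditional expectation and use the bias decomposition (\ref{equ:rateSplit})--(\ref{equ:expectationBiasTerm}) to replace $R_k[N]$ by $\hat R_k[N]$. Your additional remark that a common target $\bar p_{\T{out}}$ makes the factor $(1-\bar p_{\T{out}})$ cancel in the $\arg\max$ is not in the paper's proof but is a nice observation, and your caveat about the rigor of the Taylor step matches the paper's own level of heuristic (it simply writes ``$\backsimeq$'').
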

\begin{proof}
The throughput of UE $k$ experienced until time slot $N$ is given as
\begin{equation}
\renewcommand\arraystretch{1.8}
\begin{array}{rl}
T_k[N] &= \frac{1}{N}\sum_{n=1}^NI_k[n]R_k[n]\\
& = \frac{1}{N}\left(\sum_{n=1}^{N-1} I_k[n]R_k[n] + I_k[N]R_k[N]\right)\\
& = \frac{N-1}{N}T_k[N-1] + \frac{1}{N}I_k[N]R_k[N] = 0.\\
\end{array}
\label{equ:throughput_der}
\end{equation}
Inserting (\ref{equ:rateSplit}) into (\ref{equ:throughput_der}) results in the expected throughput, given as
\begin{equation}
\mathds{E}\{T_k[N]\} = \frac{N-1}{N}T_k[N-1] + \frac{1}{N}I_k[N]\hat{R}_k[N].
\label{equ:expectedThroughput}
\end{equation}
Note, that (\ref{equ:expectedThroughput}) results from the zero mean random bias term, as it is given in (\ref{equ:expectationBiasTerm}).
\par
The throughput values of all $K$ UEs given in (\ref{equ:throughput_der}) are collected in vector
\begin{equation}
\f{T}[N] = [T_1[N],\ldots,T_K[N]]^T.
\label{equ:vector}
\end{equation}
As defined in \cite{ZFL11}, the PF scheduler aims to maximize the utility function
\begin{equation}
u\left(\f{T}[N]\right) = \sum_{k=1}^K\log\left(T_k[N]\right)
\label{equ:utility}
\end{equation}
for an infinite number of time slots, expressed as
\begin{equation}
\max \underset{N\rightarrow\infty}{\lim}\sum_{k=1}^K\log\left(T_{k}[N]\right).
\end{equation}
The gradient of (\ref{equ:utility}) results in
\begin{equation}
\renewcommand\arraystretch{1.8}
\begin{split}
\nabla u\left(\f{T}[N]\right) &= \left[\frac{\partial u\left(\f{T}[N]\right)}{\partial T_1[N]},\ldots,\frac{\partial u\left(\f{T}[N]\right)}{\partial T_K[N]}\right]^T\\
&= \left[\left(T_1[N]\right)^{-1},\ldots,\left(T_K[N]\right)^{-1}\right]^T.\\
\end{split}
\label{equ:derivative}
\end{equation}
The second line comes from $\partial \log(x)/\partial x = 1/x$.
Since the scheduler acts on time slot basis, the utility increment is of interest, which can be expressed by
\begin{equation}
\renewcommand\arraystretch{2.1}
\begin{split}
\hspace{-0.22cm}& u\left(\f{T}[N]\right)-u\left(\f{T}[N-1]\right)\backsimeq\\  &\hspace{0.52cm}\backsimeq\nabla u\left(\f{T}[N-1]\right)^T\left(\f{T}[N]-\f{T}[N-1]\right)\\
&\hspace{0.52cm}=\sum_{k=1}^K\frac{T_k[N] - T_k[N-1]}{T_k[N-1]}\\
&\hspace{0.52cm}\overset{(\ref{equ:throughput_der})}{=}\sum_{k=1}^K\frac{I_k[N]R_k[N] - T_k[N-1]}{NT_k[N-1]}\\
&\hspace{0.52cm}=\sum_{k=1}^K\frac{I_k[N]\left(\hat{R}_k[N]+\Psi_k[N]\right)}{NT_k[N-1]}-\frac{K}{N}.\\
\end{split}
\label{equ:increment}
\end{equation}
The first and second line result from rearranging the difference quotient, while line three comes from inserting (\ref{equ:derivative}) and (\ref{equ:vector}). With the definition of throughput in (\ref{equ:throughput_der}) and the substitution of the actual transmission rate in (\ref{equ:rateSplit}) the last two lines can be stated. 
\par
Consequently, maximizing the utility increment (\ref{equ:increment}) results in scheduling the UE with the largest value for 
\begin{equation}
\nu_k[n] = \frac{I_k[N](\hat{R}_k[N]+\Psi_k[N])}{T_k[N-1]}.
\end{equation} 
However, $\Psi_k[N]$ is unknown to the scheduler and the expected utility increment $\mathds{E}\{f\left(\f{T}[N]\right) - f\left(\f{T}[N-1]\right)\}$ need to be maximized. As already stated in (\ref{equ:expectationBiasTerm}) the expectation $\mathds{E}\{\Psi_k[N]\}$ is equal to zero. Hence, scheduling the UE with the largest value for 
\begin{equation}
\hat{\nu}_k[n] = \mathds{E}\{\nu_k[n]\}=\frac{I_k[N]\hat{R}_k[N]}{T_k[N-1]},
\label{equ:schedulingMetric}
\end{equation}
maximizes the expected utility increment. Note, that with (\ref{equ:schedulingMetric}) also  the utility function $u(\mathds{E}\{\f{T}[N]\})$ is maximized. 
\end{proof}

\subsection{Delayed Acknowledgements}
\label{sec:delayedCSI}

In this subsection, the robust PF scheduler derived in Section \ref{sec:immediateCSI} is extended to the more realistic case of delayed transmission acknowledgements. One of the main sources of CSI imperfections is a delayed feedback and backhaul transmission. Hence, also the confirmation about the success of previous transmission is not known instantaneously to the scheduler. 

\begin{Theorem}
The robust PF scheduler for delayed transmission acknowledgements selects the user with the largest value 
\begin{equation}
\hat{\nu}_k[n] = \frac{I_k[N]\hat{R}_k[N]}{\tilde{T}_k[N]},
\label{equ:finalMetrictheorem}
\end{equation}
where $\tilde{T}_k[N] = 1/\sum_{m=1}^M\frac{\mathds{P}\{D=d_m\}}{w_0N+ d_m}$ denotes the throughput expected to be achieved at user $k$. In this regard the constant $w_0 = (N-\Delta-1)/N\cdot T_k[N-\Delta-1]$ is defined as well as the discrete random variable $D=\sum_{j=N-\Delta}^{N-1} I_k[j]R_k[j]$, which comprises $M = 2^\nu$ events, while $\nu = \sum_{n=N-\Delta}^{N-1}I_k[n]$).
\par
Consequently, beside the rate assigned for the upcoming transmission also the value for the throughput which is assumed to be achieved by user $k$ so far needs to be adapted.
\end{Theorem}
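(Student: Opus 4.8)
The plan is to recycle the utility-increment argument from the proof of Theorem~1 and to touch only the one quantity that the delay makes unavailable to the scheduler, namely the past throughput $T_k[N-1]$ in the denominator of the metric. First I would observe that the first-order expansion of the utility increment in~(\ref{equ:increment}) is not affected by the feedback delay, so the scheduler still maximises the expected increment, which --- exactly as in Theorem~1 --- reduces to maximising $\mathds{E}\{\nu_k[N]\}$ with $\nu_k[N]=I_k[N](\hat R_k[N]+\Psi_k[N])/T_k[N-1]$. The genuinely new point is that, with acknowledgements delayed by $\Delta$ slots, the successes $S_k[j]$ for $j=N-\Delta,\dots,N-1$ have not yet been reported, so $T_k[N-1]$ is now a random variable from the scheduler's viewpoint, whereas $\hat R_k[N]$ is still a deterministic quantity the scheduler computes from the current estimate.

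Next I would split the throughput recursion~(\ref{equ:throughput_der}) at time slot $N-\Delta-1$. Unrolling it gives $(N-1)\,T_k[N-1]=(N-\Delta-1)\,T_k[N-\Delta-1]+\sum_{j=N-\Delta}^{N-1}I_k[j]R_k[j]$, whose first summand is precisely $w_0 N$ with $w_0$ as defined in the statement and is known, while the second summand is exactly the discrete random variable $D$. Hence $T_k[N-1]=(w_0 N+D)/(N-1)$, and all the randomness of the denominator is carried by $D$. To describe the law of $D$ I would use $R_k[j]=S_k[j]\bar R_k[j]$: the assigned rates $\bar R_k[j]$ were chosen by the scheduler and are therefore known, so $D$ is a sum of $\nu=\sum_{n=N-\Delta}^{N-1}I_k[n]$ scaled Bernoulli variables, one for each scheduled slot, with $\mathds{P}\{S_k[j]=1\}=1-p_{\T{out},k}[j]$ computable from the outage expression of Section~\ref{sec:outageProbability}. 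Taking the $S_k[j]$ to be independent across slots (as elsewhere in the model), the $2^\nu$ success/failure patterns enumerate the $M=2^\nu$ events, each carrying the product of the corresponding $p_{\T{out},k}[j]$ and $1-p_{\T{out},k}[j]$ factors and producing a value $d_m=\sum_{j\in\mathcal{A}_m}\bar R_k[j]$.

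Finally I would assemble the pieces by taking the expectation in two stages. First over the current bias $\Psi_k[N]$, which is zero-mean and independent of the past transmissions exactly as exploited in Theorem~1, so that term drops; then over $D$. This yields $\mathds{E}\{\nu_k[N]\}=I_k[N]\hat R_k[N]\,\mathds{E}\{1/T_k[N-1]\}=I_k[N]\hat R_k[N]\,(N-1)\sum_{m=1}^{M}\mathds{P}\{D=d_m\}/(w_0 N+d_m)=I_k[N]\hat R_k[N]\,(N-1)/\tilde T_k[N]$. The factor $N-1$ is common to every user, just as the $1/N$ dropped in Theorem~1, and is irrelevant to the arg-max; hence the scheduler selects the user with the largest $I_k[N]\hat R_k[N]/\tilde T_k[N]$, which is~(\ref{equ:finalMetrictheorem}).

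The step I expect to be the main obstacle is the accounting around the normalisation, in particular fixing $w_0$ so that $w_0 N+D$ reproduces $(N-1)T_k[N-1]$ exactly, together with the justification for pulling the $\Psi_k[N]$-expectation through the now-random denominator, which relies on the (conditional) independence of the current transmission outcome from the past successes. Aggregating probabilities when two success patterns happen to yield the same $d_m$ is only a harmless bookkeeping technicality.
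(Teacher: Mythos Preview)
Your proposal is correct and follows essentially the same route as the paper's proof: split the accumulated throughput at slot $N-\Delta-1$ so that the denominator becomes $w_0N+D$, use the first-order utility-increment expansion of Theorem~1, drop the $\Psi_k[N]$ contribution by independence and zero mean, and average $1/(w_0N+D)$ over the $2^{\nu}$ success/failure patterns. Your treatment is in fact a bit more explicit than the paper's on two points --- the accounting for the extra $(N-1)$ factor (which the paper carries as the $(N-1)/N$ prefactor in its increment formula and then silently drops) and the independence argument needed to remove the $\Psi_k[N]$ term from the now-random denominator --- but the underlying decomposition and the final metric are identical.
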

\begin{proof}
The throughput of UE $k$ achieved up to time slot $N$ is a random variable. 
First, it is known that UE $k$ has been scheduled for $\nu = \sum_{n=N-\Delta}^{N-1}I_k[n]$ transmissions within the period of interest. Hence, there are $M = 2^\nu$ possible combinations of successful and non-successful transmissions. Each combination has a certain probability, resulting from the probability of outage for each transmission.
The throughput equation (\ref{equ:throughput_der}) can be rewritten to
\begin{equation}
\renewcommand\arraystretch{1.8}
\begin{array} {rl}
\hspace{-0.22cm}T_k[N] \hspace{-0.2cm}&= \frac{1}{N}\big(\sum_{i=1}^{N-\Delta-1} I_k[i]R_k[i] + \\
&\quad+\sum_{j=N-\Delta}^{N-1} I_k[j]R_k[j] + I_k[N]R_k[N]\big)\\
& = \frac{N-\Delta-1}{N}T_k[N-\Delta-1] + \\
&\quad+\frac{1}{N}\sum_{j=N-\Delta}^{N-1} I_k[j]R_k[j] + \frac{1}{N}I_k[N]R_k[N].\\
\end{array}
\label{equ:througput_delay}
\end{equation}
While $w_0 = (N-\Delta-1)/N\cdot T_k[N-\Delta-1]$ is known to the scheduler, the remaining part of the sum in (\ref{equ:througput_delay}) is not, but can be split up according to (\ref{equ:rateSplit}). 
Therefore, the utility increment as stated in (\ref{equ:increment}) is given as 
\begin{equation}
\begin{split} 
\mathds{E}\{u\left(\f{T}[N]\right) \hspace{-0.02cm}&- u\left(\f{T}[N-1]\right)\} = \\
&=\frac{N-1}{N}\sum_{k=1}^K\frac{I_k[N]R_k[N]}{w_0N+ D}-\frac{K}{N},
\end{split}
\end{equation}
where $D=\sum_{j=N-\Delta}^{N-1} I_k[j]R_k[j]$ is a discrete random variable with $M = 2^\nu$ events, each of which denoted as $d_m$, $\forall m$.
Consequently, maximizing the expected utility increment lead to scheduling the user with the largest value
\begin{equation}
\begin{split}
\hat{\nu}_k[n]\hspace{-0.02cm}&=\mathds{E}\left\{\frac{I_k[N]R_k[N]}{w_0N+ D}\right\}\\ 
&=\mathds{E}\left\{\frac{I_k[N]\hat{R}_k[N]}{w_0N+ D} + \frac{I_k[N]\Psi_k[N]}{w_0N+ D} \right\}\\
&=\mathds{E}\left\{\frac{I_k[N]\hat{R}_k[N]}{w_0N+ D}\right\}.
\end{split}
\label{equ:schedulingMetric2}
\end{equation}
The third line in (\ref{equ:schedulingMetric2}) is obtained, since the expectation w.r.t. the discrete random variable can be split up into a finite sum, where each summand is equal to zero, due to $\mathds{E}\{\Psi_k[n]\}=0$. The expectation in the third line of (\ref{equ:schedulingMetric2}) can also be calculated by summing over terms for the discrete events, weighted by the probability of occurrence, to
\begin{equation}
\hat{\nu}_k[n] = I_k[N]\hat{R}_k[N]\sum_{m=1}^M\frac{\mathds{P}\{D=d_m\}}{w_0N+ d_m}.
\label{equ:finalMetric}
\end{equation}
The expected throughput is given as $\tilde{T}_k[N] = 1/\sum_{m=1}^M\frac{\mathds{P}\{D=d_m\}}{w_0N+ d_m}$. 
As for immediate feedback, (\ref{equ:finalMetric}) also maximized the utility function $u(\mathds{E}\{\f{T}[N])$ for the delayed feedback case.
\end{proof}

\section{Simulation Results}
\label{sec:simulationResults}
In this section, we illustrate the efficiency of the derived scheduling schemes. We evaluate the performance based on Monte-Carlo simulations 
considering a simple scenario, where a single BS serves two users on orthogonal resources (one in each time slot). Note that two users refer to the simplest scenario although the basic effects are captured.
The users are randomly placed within a circular area around the BS with radius $\unit[250]{m}$ which corresponds to an inter-site distance of $\unit[500]{m}$. The respective mean channel gain results from $\lambda_k=\beta d_k^{-\alpha}$, where $\alpha=3.5$ and $\beta=10e^{-14.45}$ is chosen according to 3GPP urban macro scenario \cite{3GPP10}.  
The simulations are done for 10,000 user drops, where for each drop 100 subsequent channel realizations are generated. The receive SNR at the border of the serving area (at $\unit[250]{m}$ distance from the BS) is assumed to be either $\unit[5]{dB}$ or $\unit[10]{dB}$. The impairment of CSI is generated by assuming a normalized coherence time of $T_{\T{C}} = 10$ transmission blocks. In an LTE system with a transmission block duration of 1 ms and a carrier frequency of 2.6 GHz, this refers to user mobilities of 10 km/h. Furthermore, we assume $N_{\T{P}}=8$ pilot signals are included per block and no quantization error is assumed ($Q = \infty$). The channel uncertainty (MSE between actual channel and its estimate) as a function of the feedback/backhaul delay is illustrated in \figref{fig:channeluncertainty}.
\begin{figure}[t]
	\centering
	\includegraphics[width=0.48\textwidth]{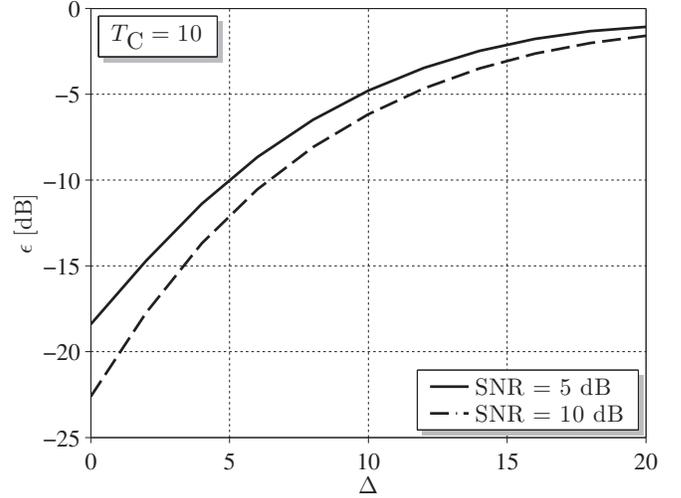}
	\caption{Channel uncertainty (MSE between actual channel and its estimate) as a function of the normalized feedback/backhaul delay for two different SNR values.}
	\label{fig:channeluncertainty}
\end{figure}
\begin{figure}[t]
	\centering
	\includegraphics[width=0.48\textwidth]{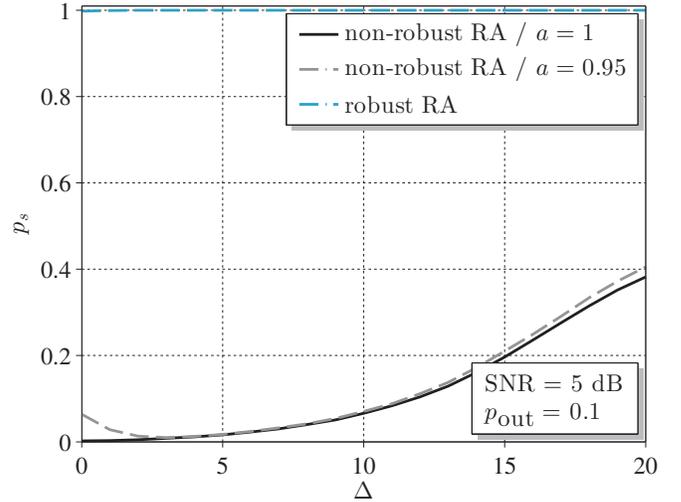}
	\caption{Portion of fulfilled target outage constraints as a function of the normalized delay.}
	\label{fig:percerntage}
\end{figure}
\figref{fig:percerntage} shows the portion of fulfilled target outage constraints (max. variance of 10 \%). If the proposed robust rate allocation (RA) is applied, the outage constraints are always satisfied. However, the non-robust RA with and without backoff shows a significant gap. The slight increase for large delays is caused by channel estimation/prediction, which inherently assesses a higher channel uncertainty with smaller CSI amplitudes and reduces the outage probability.
\par
For each channel realization, at first rate adaptation is performed for each user and then one of the users is selected for transmission, according to the respective scheduling scheme. In the following, we compare proportional fair (PF) scheduling with perfect CSI (P-CSI) and imperfect CSI (I-CSI), while for the latter case non-robust scheduling as well as the proposed robust scheduling is applied. The non-robust scheduler expects the available CSI to be perfect. Again, a back-off factor of $a=0.95$ and $a=1$ is used for simulations.
\par
\begin{figure}[t]
	\centering
	\includegraphics[width=0.48\textwidth]{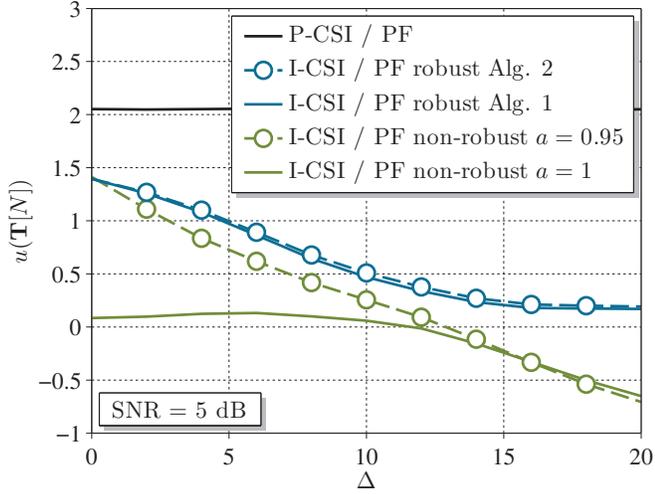}
	\caption{Proportional fair (PF) metric as a function of the normalized delay for $K=2$ users and SNR $=5$ dB.}
	\label{fig:scheduling_01}
\end{figure}
\begin{figure}[t]
	\centering
	\includegraphics[width=0.48\textwidth]{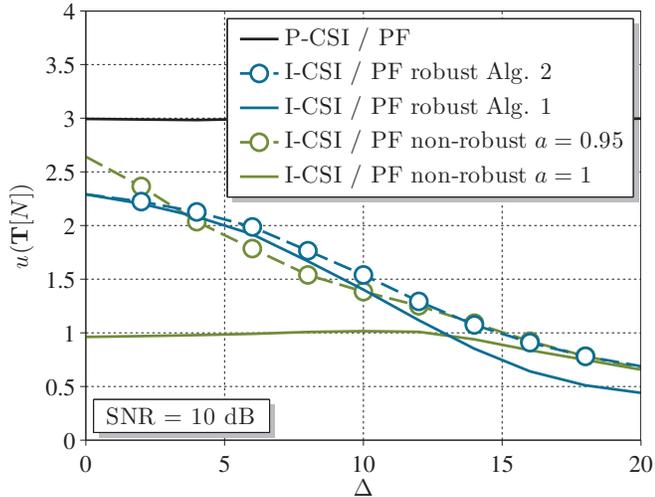}
	\caption{Proportional fair (PF) metric as a function of the normalized delay for $K=2$ users and SNR $=10$ dB.}
	\label{fig:scheduling_01b}
\end{figure}
For robust scheduling the two variants proposed in Section \ref{sec:immediateCSI} and Section \ref{sec:delayedCSI} are distinguished. The algorithm, which assumes immediate acknowledgements refers to \textit{Alg. 1}. In this case, the scheduler just uses the acknowledgements available. For \textit{Alg. 2} the algorithm for delay feedback is applied, where an inherent estimation of the missing transmission acknowledgements is performed.
\par
\begin{figure}[t]
	\centering
	\includegraphics[width=0.48\textwidth]{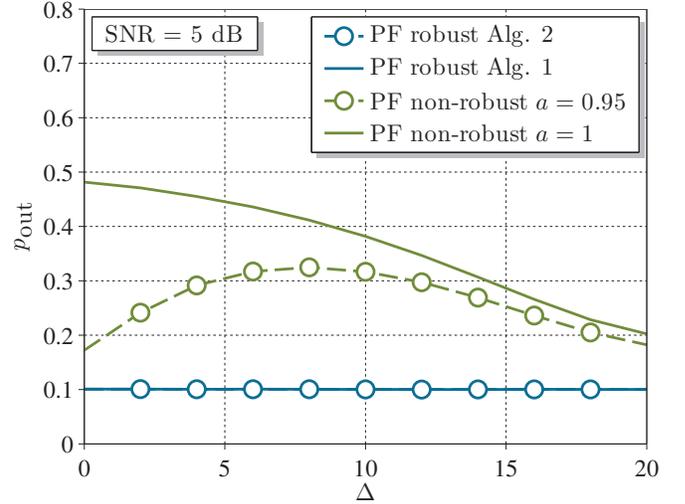}
	\caption{Resulting outage probability as a function of the normalized delay for $K=2$ users and SNR $=5$ dB.}
	\label{fig:scheduling_02}
\end{figure}
The PF metric as it is defined in (\ref{equ:utility}) as a function of the normalized delay $\Delta$ is illustrated in \figref{fig:scheduling_01}. Note that the PF utility metric $u(\f{T}[N])=\sum_{k=1}^K\log\left(T_k[N]\right)$ is defined as the sum of logarithmic throughput values. The highest values are achieved for perfect CSI (black solid line), which is independent of the delay. Performing the same algorithm with imperfect CSI (without backing off) results in the green solid line. This scheme basically suffers from high outages (see \figref{fig:scheduling_02}). Increasing the delay can even lead to marginal improvements in the PF metric. This is again a result from channel estimation/prediction as discussed before. Note that even for a delay of $\Delta=0$ blocks, the CSI is imperfect due to noisy pilot reception. Employing a back-off factor of $a=0.95$ (green dashed line with 'o' markers) causes a much higher percentage of successful transmissions, especially for small delays. Therefore, the PF metric significantly increases. However, even if the outage probability is very low in this region, it is not constant for all CSI amplitudes (see \figref{fig:resOutage}). Hence, the QoS constraints cannot be achieved at each transmission. 
\par
\begin{figure}[t]
	\centering
	\includegraphics[width=0.48\textwidth]{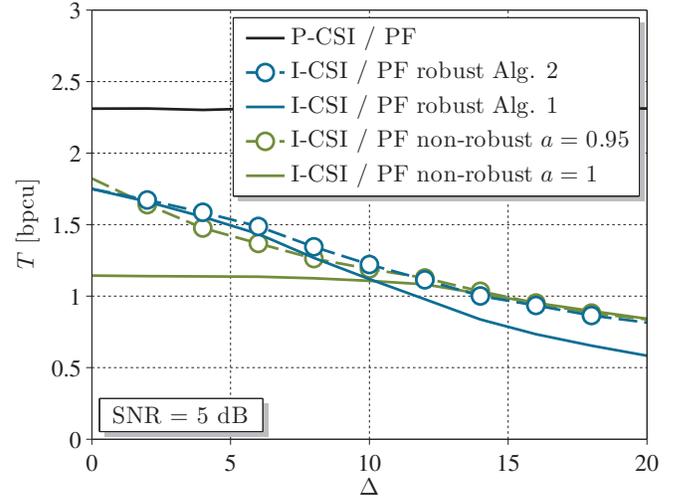}
	\caption{Mean user throughput as a function of the normalized delay for $K=2$ users and SNR $=5$ dB.}
	\label{fig:scheduling_03}
\end{figure}
The proposed robust PF algorithms (blue lines) outperform the non robust solutions for the SNR $=5$ dB. However, the performance advantage disappears by increasing the SNR (compare \figref{fig:scheduling_01} and \figref{fig:scheduling_01b}). For an SNR of $10$ dB, small performance degradations can be observed for low ($<4$ blocks) and high latency ($>10$ blocks). However, as can be seen from \figref{fig:scheduling_02}, the target outage probability of $p_{\T{out}}=0.1$ is achieved over the whole latency range and is particularly ensured at each transmission (see \figref{fig:resOutage}). Employing Alg. 2 (blue dashed line with 'o' markers) results in small gains of the PF metric compared to Alg. 1 (blue solid line). This gain increases with $\Delta$. 
\par
In addition, the mean user throughput is illustrated as a function of the normalized delay (see \figref{fig:scheduling_03}). In general, the throughput decreases with the delay. Small differences between the illustrated scheme can be observed, especially at delays between 5 and 15 blocks. For large delays, the non-robust scheme wth back-off factor performs equally to the robust ones, while only the robust algorithm ensures the QoS constrains. In addition, the performance gain by employing Alg. 2 instead of Alg. 1 is rather marginal but increases with the feedback delay.

\section{Conclusions}
\label{sec:conclusions}
In this work, a robust rate adaptation scheme together with a robust proportional fair (PF) scheduling algorithm have been presented, where both take into account that the available CSI is impaired. The algorithms aim to achieve a fixed outage probability at each transmission (independent of the channel amplitude), as it is of interest for delay critical applications. The robust PF algorithm has been derived for immediate  as well as for delayed transmission acknowledgements. For the first case, the robust PF scheduler only requires a change in the rate adaptation, while the scheduler itself can remain unchanged. If the success of former transmissions is not immediately known at the BS side, the scheduler inherently performs an estimation of the already achieved throughput per user. Performance advantages of the proposed schemes have been shown for a two user example scenario. It was shown that PF utility metric as well as throughput gains increase with latency. System level aspects for scenarios with more than two users will be addressed in our future work.

\appendices
\section{Algorithmic Solution for Robust Rate Adaptation}
\label{sec:appendix}

The rate adaptation function depends on three system parameters, the SNR $\gamma$, the channel uncertainty $\epsilon$ as well as the target outage probability $\bar{p}_{\T{out}}$. 
The numerical search for $\bar{R}$ based on (\ref{equ:outageProb}) with the given system parameters and estimated channel amplitude $\hat{g}$ is quite complex. Hence, a real-time calculation might be impractical. An alternative solution is the pre-calculation of the rate adaptation function for a certain channel amplitude resolution. In this case, the real-time adaptation refers to a look-up table search. Such approximative scheme refers to discrete rate adaptation as discussed, e.g., in \cite{GM13}.
\par 
In the following, the calculation of (\ref{equ:outage_last}) is discussed by taking numerical limitations into account, such as, floating point resolution. First, it can be observed that the argument of the modified Bessel function in (\ref{equ:outage_last}) decreases with the error $\epsilon$. Hence, small error values can result in large arguments and lead to values above the numerical resolution. 
In contrast, the exponential function in (\ref{equ:outage_last}) increases with the error $\epsilon$ and hence balances the product. In other words, the product of the two factors might fall into the numerical resolution, while each factor may not.
\par 
In order to overcome this problem, the modified Bessel function can be approximated for large arguments. The logarithm of the Bessel function behaves close to linear for large arguments, i.\,e., $\T{ln}(J_m(x))=a_mx + b_m$ can be utilized:
\begin{equation}
	\tilde{J}_m(x) = 
	\begin{cases}
		J_m(x)& x \leq X_m \\
		\T{exp}(a_mx+b_m)& x > X_m.\\
	\end{cases}
	\label{equ:approxBessel}
\end{equation}
The constants can be obtained by selecting two points $X_1$ and $X_2$ and calculating the steepness according to
\begin{equation}
	a_m = \frac{\T{ln}(J_m(X_2)) - \T{ln}(J_m(X_1))}{X_2 - X_1},
\end{equation}
while the shift is given by
\begin{equation}
	b_m = \T{ln}(J_m(X_1)) - a_mX_1.
\end{equation}
With the approximation in (\ref{equ:approxBessel}), the outage probability given in (\ref{equ:outage_last}) results in
\begin{equation}
	\tilde{p}_{\T{out}} = 1 - \sum_{m=0}^{\infty}\tilde{d}_m,
\end{equation} 
with the approximated summand
\begin{equation}
	\tilde{d}_m = \T{exp}\left(c_0\right)\tilde{J}_m(c_3b),
	\label{equ:d_m}
\end{equation} 
including the coefficients $c_0 = c_1+c_2b^2+m(\T{ln}(c_4) - \T{ln}(b))$, $c_1= - \hat{g}^2/(\lambda\epsilon)$, $c_2 = -1/(\lambda\epsilon)$, $c_3 = 2\hat{g}/(\lambda\epsilon)$, $c_4=\hat{g}$ and $b = \sqrt{2^R-1/\rho}$. 
\par
In case the modified Bessel function reaches the numerical limit, (\ref{equ:d_m}) can be written as
\begin{equation}
	\tilde{d}_m = \T{exp}\left(c_0+a_mc_3b+b_m\right)\quad\T{if } c_3b>X_m.
	\label{equ:approximation}
\end{equation} 
This expression allows to perform the calculation within a numerical stable range. However, the approximation includes inaccuracies which may lead to approximation errors. Inaccuracies can be observed for small target outage probabilities. Consequently, the robust algorithm would assign higher rates and the resulting outage probability exceeds the target, as illustrated in \figref{fig:resOut_tarOut}. However, this approximation error can be avoided by forcing the rate-assignment function to be monotonically increasing.

\begin{figure}[t]
	\centering
	\includegraphics[width=0.48\textwidth]{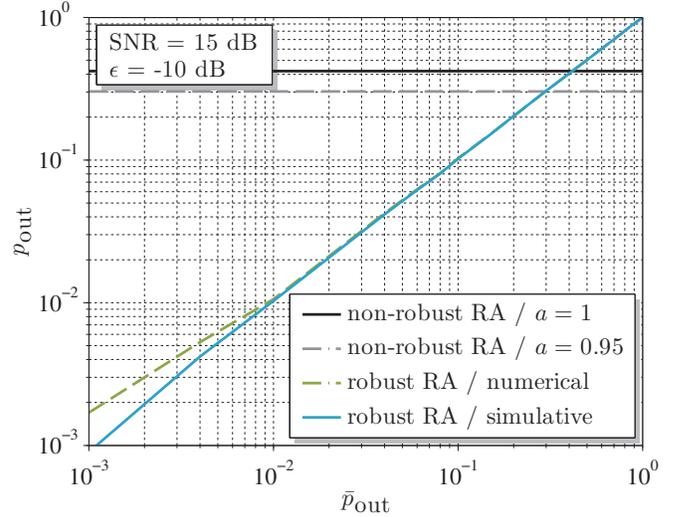}
	\caption{The resulting outage probability as a function of the targe outage probability.}
	\label{fig:resOut_tarOut}
	\vspace{-0.5cm}
\end{figure} 
In order to evaluate (\ref{equ:outage_last}) numerically, the sum over an infinite number of arguments need to be restricted to $M_{\T{max}}$, resulting in
\begin{equation}
	p_{\T{out}}\approx\tilde{p}_{\T{out}}=1 - \sum_{m=0}^{M_{\T{max}}}\tilde{d}_m.
	\label{equ:outage_numerical}
\end{equation}
Note that a small value of $M_{\T{max}}$ can cause further inaccuracies. However, simulative experiments showed that $M_{\T{max}}=150$ lead to a sufficient accuracy.

\section*{Acknowledgment}
The research leading to these results has received funding
from the European Union Seventh Framework Programme
(FP7/2007-2013) under grant agreement no. 317941. The
authors would like to acknowledge the contributions of their
colleagues in iJOIN, although the views expressed are those
of the authors and do not necessarily represent the project.

\bibliographystyle{IEEEtran}
\bibliography{references}

\end{document}